\DeclareMathAlphabet{\pazocal}{OMS}{zplm}{m}{n}
\newtheorem{theorem}{Theorem}[section]
\newtheorem{lemma}[theorem]{Lemma}
\numberwithin{equation}{section}
\newcommand{\II}{{\mathbb I}}
\newcommand{\RR}{{\mathbb R}}
\newcommand{\Xs}{{\mathscr{X}}}
\newcommand{\bgamma}{{\bm \gamma}}
\newcommand{\bsigma}{{\bm \sigma}}
\newcommand{\btau}{{\bm \tau}}
\newcommand{\bai}{{\bm a}}
\newcommand{\bbi}{{\bm b}}
\newcommand{\bfi}{{\bm f}}
\newcommand{\bgi}{{\bm g}}
\newcommand{\bni}{{\bm n}}
\newcommand{\bxi}{{\bm x}}
\newcommand{\byi}{{\bm y}}
\newcommand{\bAi}{{\bm A}}
\newcommand{\bBi}{{\bm B}}
\newcommand{\bEi}{{\bm E}}
\newcommand{\bOi}{{\bm O}}
\newcommand{\Af}{{\mathfrak A}}
\newcommand{\Wf}{{\mathfrak W}}
\newcommand{\gradient}{\text{grad}}
\newcommand{\divergence}{\text{div}}
\newcommand{\curl}{\text{curl}}
\newcommand{\Test}{\pazocal{D}}
\def\ie{{\it i.e.\ }}
\begin{document} 

%

\title{\LARGE On string-localized potentials and gauge fields}
\author{Detlev Buchholz${}^a$,
Fabio Ciolli${}^b$, Giuseppe Ruzzi${}^b$ and   
Ezio Vasselli${}^b$ \\[20pt]
\small  
${}^a$ Mathematisches Institut, Universit\"at G\"ottingen, \\
\small Bunsenstr.\ 3-5, 37073 G\"ottingen, Germany\\[5pt]
\small
${}^b$ 
Dipartimento di Matematica, Universit\'a di Roma ``Tor Vergata'' \\
\small Via della Ricerca Scientifica 1, 00133 Roma, Italy \\
}
\date{}

\maketitle

{\small 
\noindent {Abstract. A recent idea,  
put forward by Mund, Rehren and Schroer in~\mbox{\cite{Mu, MuReSch, Sch}}, 
is discussed; it suggests that in gauge quantum field theory 
one can replace the point-localized gauge fields by  
string-localized vector potentials built from gauge invariant
observables and a principle of 
string-independence.  Based on a kinematical 
model, describing unmovable (static) fields carrying 
opposite charges, it is shown that these string-localized potentials cannot 
be used for the description of the gauge bridges between
electrically charged fields. 
These bridges are needed in order to ensure the validity of Gauss' law.  
This observation does not preclude the existence of 
Poincar\'e invariant theories, describing the coupling of string-localized 
gauge invariant potentials to matter fields. 
But these potentials are not a full-fledged substitute for the gauge fields in 
``usual'' quantum electrodynamics.}
} \\[2mm] 
\textbf{Mathematics Subject Classification.}  81T05, 83C47, 57T15 \\[2mm]
\textbf{Keywords.} gauge bridges; string-localized observable potentials:
Gauss' law  

\section{Introduction}
\label{sec1}

It is a well known fact that tensor fields in local quantum field 
theory, which are closed two-forms on Minkowski space, 
need not be exact, the most 
prominent example being the electromagnetic field. One can resolve 
this cohomological obstruction if one proceeds to indefinite metric 
extensions of the physical Hilbert space, where the tensor fields
can be represented as curls (exterior derivatives) of local 
vector-valued gauge fields. This fact plays 
an important role in the perturbative construction of gauge 
quantum field theories. However,  
according to folklore, the cohomological difficulties 
can also be resolved within the standard Hilbert space framework of 
quantum field theory by giving up the locality condition.
This can easily be seen by looking in momentum space at the equation, 
\mbox{expressing} the fact that a field $F$ is a closed 
two-form (\ie the homogeneous Maxwell equation),
\[
p_\gamma \, \widetilde{F}_{\alpha \beta}(p) - 
p_\beta \, \widetilde{F}_{\alpha \gamma}(p) + 
p_\alpha \, \widetilde{F}_{\beta \gamma}(p) = 0 \, .
\]
Upon contracting this equation with a fixed vector $e^\gamma$ and
dividing it (in the sense of distributions) by $e p$ one 
finds that $F$ can be represented as curl of some non-local vector 
potential on the original Hilbert space. This field is
represented in configuration space by a 
string-localized integral of~$F$. 

\medskip
In recent articles, Mund, Rehren and 
Schroer \cite{Mu, MuReSch, Sch} have emphasized
that the latter field has better ultraviolet properties; this   
should help to improve the perturbative construction 
of field theories, where string localized vector fields, arising from local
tensors, are coupled
to local matter fields. Since the direction of the chosen vector 
$e$ ought to be irrelevant for the resulting theory, the authors 
proposed to solve the renormalization problems arising in the 
perturbative construction of string localized fields by a  
condition of string-independence. Indeed, first results 
seem to indicate that this idea is meaningful \cite{MuReSch}. Yet 
it remained unclear whether the string-localized observable vector fields
can replace the local gauge fields 
appearing in the gauge bridges, connecting electrically charged fields. 
These gauge bridges are an important ingredient 
in the construction of the electrically charged superselection sectors. 

\medskip
It is the aim of the present letter to shed light on this
problem. For the sake of simplicity and concreteness, we consider 
the limiting case of quantum electrodynamics, where the 
gauge field is coupled to static (infinitely heavy,   
unmovable) fields carrying an electric charge. 
This assumption allows us to analyze the model at the kinematical 
(fixed time) level on $\RR^3$. A convenient gauge in this limit 
is the so-called $A_0 = 0$ (temporal) gauge. There the spatial components 
$\bAi$, $\bEi$ of the vector potential, respectively electric field,  
satisfy canonical commutation
relations. They can be represented on Hilbert spaces,  
where the remaining spatial gauge transformations act 
by unitary operators. The electric field 
$\bEi$ and magnetic field $\bBi \doteq \curl \, \bAi$
are gauge invariant and hence are observable.
Representations of the observable algebra, 
describing the electromagnetic field in the absence of
charged fields, are characterized by the 
condition that the charge density $\divergence \, \bEi$ vanishes.
In representations describing static 
charges, which are localized at given points in $\RR^3$, the localization 
of the observables has to be restricted to the complement 
of these singularities. Still, the charge density 
has to vanish in this complement.

\medskip
The second de Rham cohomology of the punctured space $\RR^3$ does 
not vanish, comprising the possible values of the static charges. 
In spite of the fact that the positions of the static charges are not
accessible, it ought to be possible according to Gauss' law 
to determine their values by surface integrals of the electric field, 
surrounding their positions; these surface integrals are called flux 
operators. In order to comply with this condition, opposite charges must 
be connected by gauge bridges, including the possibility that some 
bridges extend to spatial infinity if the total charge is different
from zero.  The gauge bridges are described by automorphisms of the 
algebra of observables and are constructed from the local vector
potential $\bAi$. Their action on flux operators which are localized
in the spatial complement of the charged fields sharply 
indicates the value of the charges 
sitting at the endpoints of the bridges. So the charge content
of states can be determined in this manner.

\medskip
On the other hand, as we shall see,  
all automorphisms induced by observables act trivially
on the flux operators, so they cannot be used to 
replace the local gauge fields in the gauge bridges. 
In particular, they do not 
encode information about the values of the static charges.
This holds also true for limits of these automorphisms, 
obtained for example by going to the limiting case 
of string-localized observable fields. 
Since none of these automorphisms induces Gauss' law,
which is a characteristic of the electric charge, we are led 
to the conclusion  that the string-localized observable 
vector fields considered in \cite{Mu, MuReSch, Sch} do not replace the
local gauge fields in every respect.

\medskip
Our article is organized as follows. In the subsequent section we 
specify the model, establish our notation and recall some basic 
cohomological facts.
Sec.~3 contains the definitions of gauge bridges and flux operators
and an analysis of their properties. In Sec.~4 the properties of
representations containing static charges are discussed and the
article closes with some brief conclusions.

\section{The kinematical algebra}
\label{sec2}

Since we want to study the quantized 
electromagnetic field in presence of static (unmovable) charges,
we will work at fixed time and consider 
the so-called $A_0 = 0$ (temporal) gauge \cite{LoMoSt}. 
There the spatial parts $\bAi$ of the vector potential 
and $\bEi$ of the electric field satisfy
canonical commutation relations in Weyl form. 
Denoting by $\Test(\RR^3)$ the space of real vector-valued test
functions with compact support, the unitary Weyl operators are
given by $e^{i\bAi(\bfi)}, e^{i\bEi(\bgi)}$, $\bfi,\bgi \in \Test(\RR^3)$;
derivatives of the fields are defined in the sense of distributions.
The Weyl operators satisfy the relations 
\begin{equation}  \label{e2.1} 
e^{i\bAi(\bfi)} e^{i\bEi(\bgi)} 
= e^{i \langle \bfi, \bgi \rangle } 
e^{i\bEi(\bgi)} e^{i\bAi(\bfi)} \, ,
\end{equation}
where $\langle \bfi, \bgi \rangle = \int \! d\bxi \, \bfi(\bxi) \bgi(\bxi)$.
Weyl operators depending only on the vector potential $\bAi$, or on the electric
field $\bEi$, commute amongst each other and form unitary representations of
the vector space $\Test(\RR^3)$. The full set of
Weyl operators generates a C*-algebra~$\Wf(\RR^3)$, the
Weyl algebra. 
It contains for any given open region ${\bm O} \subset \RR^3$ a  
subalgebra $\Wf({\bm O})$ which is generated by all Weyl operators
assigned to test functions $\bfi, \bgi
\in \Test({\bm O})$, the subspace of functions having support
in ${\bm O}$. Operators localized in disjoint regions commute.

\medskip
On $\Wf(\RR^3)$ act the spatial gauge transformations 
$\bAi \mapsto \bAi + \gradient \, \bm s$, where~$\bm s$ is any
scalar test function. These transformations are induced by the adjoint
action of the Weyl operators
$e^{i \bEi(\text{grad} \, s)} = e^{- i (\divergence \, \bEi)(s)}$. 
The electric field $\bEi$
and the magnetic field $\bBi \doteq \curl \, \bAi$ are
gauge invariant. They generate
the algebra of observables $\Af(\RR^3) \subset \Wf(\RR^3)$.
This observable algebra contains for any open region 
${\bm O} \subset \RR^3$ a local subalgebra $\Af({\bm O})$, 
which is the gauge invariant part of $\Wf({\bm O})$.
The full algebra $\Af(\RR^3)$
can be used if observations are possible everywhere in~$\RR^3$. 
Yet if static charged fields
occupy a given set of spacetime points
$\Xs  \doteq \{ \bxi_1, \dots , \bxi_n \}$, these 
singular points cannot be accessed. One must then proceed  to the subalgebras  
$\Af(\RR^3 \backslash \Xs )$.

\medskip
Before turning to the analysis of the model, we compile
some notation. We will consider integrals of field operators 
along smooth, simple paths and across closed surfaces, which 
by definition are the 
boundary of some region in  $\RR^3$. In order to
mollify these integrals, we 
pick any scalar test function $\bxi \mapsto s(\bxi)$,
which is normalized according to  $\int \! d\bxi \, s(\bxi) = 1$,  
and make use of the following definitions of specific test functions. 
\begin{itemize}
\item[(i)] For given path $\bgamma : [0,1] \rightarrow \RR^3 $, 
  we put 
  \[ \bxi \mapsto
  \bgi_\bgamma(\bxi) \doteq \int_0^1 \! du \, s(\bxi - \bgamma(u)) \, 
  \dot{\bgamma}(u) \, \in \, \Test(\RR^3) \, , \] 
  where $\dot{\bgamma}$ denotes the tangent vector of $\bgamma$.
\item[(ii)] For given surface
  $\bsigma : [0,1]^2 \rightarrow \RR^3 $, we put  
  \[ \bxi \mapsto \bgi_\bsigma(\bxi) \doteq \iint_{[0,1]^2} d^2u \,
  s(\bxi - \bsigma(u)) \, \bni_\bsigma(u) \, \in \, \Test(\RR^3) \, , 
  \]
  where $\bni_\bsigma$ is the surface normal and $d^2u \, |\bni_\bsigma(u)|$ 
  the surface form on $\bsigma$. 
\item[(iii)] For given characteristic function
  $\btau : [0,1]^3 \rightarrow \RR^3 $, fixing a region, we put
  \[ \bxi \mapsto s_\btau(\bxi) \doteq \iiint_{[0,1]^3} d^3u \,
  s(\bxi - \btau(u)) \, , 
  \] 
  which is a scalar test function. 
\end{itemize}
Choosing functions $s$ with sufficiently small support 
about the origin, the resulting test functions
have supports in arbitrarily small neighborhoods of the
given regions. For fixed $s$, the following 
consequences of Gauss' and Stokes' laws are well-known, where $\partial$ 
denotes the boundary of the respective regions. 

\begin{itemize}
\item[(i)] $\gradient \, s_\btau = - \bgi_{\bsigma}$, where 
$\bsigma = \partial \btau$
\item[(ii)] $\curl \, \bgi_{\bsigma} = - 
\bgi_\bgamma$, where $\bgamma = \partial \bsigma$
\item[(iii)] $\divergence \, \bgi_\bgamma = 
s(\bgamma(0)) - s(\bgamma(1))$.   
\end{itemize}
We also note that 
$\langle \curl \, \bfi, \bgi \rangle = \langle \bfi, \curl \, \bgi \rangle
$ for $\bfi, \bgi \in \Test(\RR^3)$.

\section{Flux operators and gauge bridges}
\label{sec3}

Let $\Xs  \doteq \{ \bxi_1, \dots , \bxi_n \}$ be any given 
(possibly empty) set of points which are removed from $\RR^3$.
We analyze in this section the properties of Weyl operators which 
describe flux integrals in $\Af(\RR^3 \backslash \Xs)$,
respectively gauge bridges between points in $\Xs$.

\medskip
We begin with the flux integrals. Given any 
closed surface $\bsigma \subset \RR^3 \backslash \Xs$, we 
pick a corresponding test function $\bgi_\bsigma$ which also has
support in $\RR^3 \backslash \Xs$. The resulting Weyl operator 
$e^{i \bEi(\bgi_\bsigma)}$ is an element of $\Af(\RR^3 \backslash \Xs)$.
It commutes with all Weyl operators depending only on 
the electric field $\bEi$. According to the Weyl relations, the
commutators with the magnetic field $\bBi = \curl \, \bAi$ 
are given by
\[
e^{i \bBi(\bfi)} e^{i \bEi(\bgi_\bsigma)} = 
e^{i \langle \curl \, \bfi, \, \bgi_\bsigma \rangle} \, 
e^{i \bEi(\bgi_\bsigma)} e^{i \bBi(\bfi)} \, , \quad \bfi \in 
\Test(\RR^3 \backslash \Xs) \, .
\]
Now 
\[ \langle \curl \, \bfi, \, \bgi_\bsigma \rangle 
= \langle \bfi, \, \curl \, \bgi_\bsigma \rangle 
= \langle \bfi, \, \bgi_{\, \partial \bsigma} \rangle = 0
\]
since $\bsigma$ has no boundary, $\partial \bsigma = \emptyset$.
Thus $e^{i \bEi(\bgi_\bsigma)}$ commutes also with all magnetic fields. This 
leads us to our first result.

\begin{lemma} \label{l3.1}
Let $\bsigma \subset \RR^3 \backslash \Xs$ be a closed surface and let 
$\bgi_\bsigma$ be a corresponding test function, which also has support in 
$\RR^3 \backslash \Xs$. 
Then $e^{i \bEi(\bgi_\bsigma)}$ is contained in the center of
$\Af( \RR^3 \backslash \Xs)$. These elements are called flux operators. 
\end{lemma}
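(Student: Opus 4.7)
The plan is to show that $U \doteq e^{i\bEi(\bgi_\bsigma)}$ commutes with a generating family of $\Af(\RR^3\setminus\Xs)$. As recalled at the end of Section~\ref{sec2}, this subalgebra is generated by the Weyl operators smeared from the gauge-invariant fields $\bEi$ and $\bBi = \curl\,\bAi$ with test functions supported in $\RR^3\setminus\Xs$, so it suffices to check commutation with the two types of generators separately. A preliminary observation is that $U$ itself lies in $\Af(\RR^3\setminus\Xs)$: since the closed surface $\bsigma$ is disjoint from the finite set $\Xs$, I would pick the mollifier $s$ with support small enough that $\supp\,\bgi_\bsigma$ remains inside $\RR^3\setminus\Xs$, as anticipated in the remark following the definitions (i)--(iii) of the test functions.

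Commutation with the electric-field generators $e^{i\bEi(\bvi)}$ is immediate from the Weyl relations~\eqref{e2.1}, which imply that any two Weyl operators depending only on $\bEi$ commute amongst each other. For the magnetic-field generators $e^{i\bBi(\bfi)}$ with $\bfi \in \Test(\RR^3\setminus\Xs)$, I would rewrite $e^{i\bBi(\bfi)} = e^{i\bAi(\curl\,\bfi)}$ and apply~\eqref{e2.1} to obtain
\[
e^{i\bBi(\bfi)}\, U \;=\; e^{i\langle \curl\,\bfi,\, \bgi_\bsigma\rangle}\, U\, e^{i\bBi(\bfi)}\, .
\]
The scalar phase then vanishes by integration by parts together with the Stokes-type identity~(ii) of Section~\ref{sec2}:
\[
\langle \curl\,\bfi,\, \bgi_\bsigma\rangle \;=\; \langle \bfi,\, \curl\,\bgi_\bsigma\rangle \;=\; -\langle \bfi,\, \bgi_{\partial \bsigma}\rangle \;=\; 0\, ,
\]
since $\bsigma$ is closed and therefore $\partial\bsigma = \emptyset$, so that $\bgi_{\partial\bsigma}$ vanishes as the mollified integral of a tangent vector over an empty path. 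Combining the two cases, $U$ commutes with every generator of $\Af(\RR^3\setminus\Xs)$ and is therefore central.

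The computation itself is essentially already displayed in the paragraph preceding the statement; the only step that needs a moment of care is the support bookkeeping. The Stokes-type identity is a global fact on $\RR^3$, so one must be certain that the same $\bgi_\bsigma$ used to define $U$ in $\Af(\RR^3\setminus\Xs)$ is also the test function to which that identity applies. This is automatic from the mollified construction, and the small-support choice of $s$ in the first paragraph is what ensures that $U$ genuinely belongs to the smaller algebra. Beyond that bookkeeping, the proof reduces to one application of the canonical commutation relation plus $\partial\bsigma = \emptyset$.
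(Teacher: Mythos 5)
Your proof is correct and follows essentially the same route as the paper: commutation with the $\bEi$-Weyl operators is immediate, and commutation with $e^{i\bBi(\bfi)}$, $\bfi \in \Test(\RR^3\backslash\Xs)$, reduces via the Weyl relations and the identity $\langle \curl\,\bfi,\bgi_\bsigma\rangle = \langle \bfi,\curl\,\bgi_\bsigma\rangle$ to the vanishing of $\bgi_{\partial\bsigma}$ because $\partial\bsigma=\emptyset$. The extra remarks on the support of the mollifier and the sign in identity (ii) are fine and do not change the argument.
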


\medskip
Next, we turn to the gauge bridges. These bridges are defined by
the adjoint action of Weyl operators, involving the vector 
potential $\bAi$, on the observables. As a matter of fact, since
these bridges connect static charges, we have to proceed to 
limits of Weyl operators which are no longer elements of 
the Weyl algebra. But, as we shall see, their adjoint action 
on the observables is still
well defined. 

\medskip
Let $\bxi_1, \bxi_2 \in \Xs$ be a pair of 
points which is occupied by static charges. For the construction
of gauge bridges between these points we need to exhibit specific
test functions. To this end we fix a simple path 
$\bgamma_0 : [0,1] \rightarrow \RR^3$ which connects the two points,
\ie $\bgamma_0(0) = \bxi_1, \gamma_0(1) = \bxi_2$, does not meet 
other points in $\Xs$, and satisfies  
\[
\inf_{u \in (0,1)} \, u^{-1} (1 - u)^{-1}
| \bgamma_0(u) - \bxi_{1/2} | \geq R > 0 \, . 
\]
The neighboring paths $\bgamma_\byi : [0,1] \rightarrow \RR^3$, 
$\byi \in \RR^3$, given by
\[
u \mapsto \bgamma_\byi(u) \doteq \bgamma_0(u) + u(1-u) \, \byi \, , \quad 
 |\byi| < R \, ,
\]
also connect the two given points and, reducing the
value of $R$ if necessary, do not meet any other points in $\Xs$. 
Next, we pick some scalar test function $s$ which has 
support in a ball around the origin of radius $R$ and 
satisfies $\int \! d\bxi \, s(\bxi) = q$ for given 
$q \in \RR$. We then consider
for $u \in (0,1)$ the functions 
\[
\bxi \mapsto \bai_{u}(\bxi) \doteq \int \! d\byi \, s(\byi) \, 
\delta(\bxi - \bgamma_\byi(u)) \, \dot{\bgamma_\byi}(u) \, ,
\]
where $\delta$ denotes the Dirac measure.
For the chosen paths, these integrals can easily be 
computed and it is then apparent that the
functions $\bai_u$ are test functions having support 
in $\RR^3 \backslash \Xs$. Moreover their spatial 
derivatives on compact subsets 
of $\RR^3 \backslash \Xs$ are uniformly continuous in $u$ and 
\[
\bxi \mapsto \divergence \, \bai_u(\bxi)
= - \mbox{\Large $\frac{d}{du}$} \int \! d\byi \, s(\byi) \, 
\delta(\bxi - \bgamma_\byi(u)) \, .
\]
We can proceed now to test functions 
used for the construction of gauge bridges. Given
$\varepsilon > 0$, we put
\[
\bxi \mapsto \bbi_\varepsilon(\bxi) \doteq 
\int_\varepsilon^{1-\varepsilon} \! du \, \bai_u(\bxi) \in \Test(\RR^3) \, ,
\]
so the functions $\bbi_\varepsilon$ inherit the support and 
smoothness properties of their precursors $\bai_u$, and 
\[
\lim_{\varepsilon \searrow 0} \, \divergence \, \bbi_\varepsilon(\bxi) = 
q \, \delta(\bxi - \bxi_1) - q \, \delta(\bxi - \bxi_2) 
\] 
in the sense of distributions. 


\medskip 
After these preparations, we 
can define automorphisms $\beta_{q, \,  \bxi_1 \, \bxi_2}(\varepsilon)$ 
of the Weyl algebra, which approximate a gauge bridge
from $\bxi_1$ to $\bxi_2$ for given charge~$q$.
They act trivially on Weyl operators depending on $\bAi$ and
act on Weyl operators involving the electric field $\bEi$ according to   
\[
\beta_{q, \,  \bxi_1 \, \bxi_2}(\varepsilon)(e^{i \bEi(\bgi)}) \doteq 
 e^{i \bAi(\bbi_\varepsilon)}  e^{i \bEi(\bgi)} e^{-i \bAi(\bbi_\varepsilon)} 
= e^{i \langle \bbi_\varepsilon, \bgi \rangle} \,  e^{i \bEi(\bgi)} \, ,
\quad \bgi \in \Test({\RR^3}) \, .
\]
It follows from the preceding discussion that there is some distribution 
$\bbi_0$ such that for any $ \bgi \in \Test({\RR^3})$ 
and scalar test function $s^\prime$ on $\RR^3$ 
\[
\lim_{\varepsilon \searrow 0} \, \langle \bbi_\varepsilon, \bgi \rangle
= \langle \bbi_0, \bgi \rangle \quad \text{and} \quad
\lim_{\varepsilon \searrow 0} \, \langle \divergence \, \bbi_\varepsilon,
s^\prime \rangle = q s^\prime(\bxi_1) - q s^\prime (\bxi_2) \, . 
\] 
Since the polynomials of Weyl operators are norm dense in the 
Weyl algebra, it implies that the automorphisms 
$\beta_{q, \,  \bxi_1 \, \bxi_2}(\varepsilon)$, 
restricted to the observable algebra $\Af(\RR^3)$,
converge in this limit pointwise in the norm topology to some automorphism 
$\beta_{q, \,  \bxi_1 \, \bxi_2}$,
\begin{equation} \label{e3.1}
\lim_{\varepsilon \searrow 0} \beta_{q, \,  \bxi_1 \, \bxi_2}(\varepsilon)(A) = 
\beta_{q, \,  \bxi_1 \, \bxi_2}(A) \, , \quad 
A \in \Af(\RR^3) \, .
\end{equation}
This automorphism defines a gauge bridge. 
Since the underlying test functions $\bbi_\varepsilon$ have support in 
$\RR^3 \backslash \Xs$, it is apparent that the restriction of 
$\beta_{q, \,  \bxi_1 \, \bxi_2}$ 
to $\Af(\RR^3 \backslash \Xs)$ maps this subalgebra 
onto itself. Some relevant 
properties of this restriction are compiled in the following lemma.

\begin{lemma} \label{l3.2}
Let $\beta_{q, \,  \bxi_1 \, \bxi_2} \upharpoonright \Af(\RR^3 \backslash \Xs)$ 
be the restriction of the automorphism defined in equation
\eqref{e3.1}. Then
\begin{itemize} 
\item[(i)] for any scalar test function $s$, having support in 
$\RR^3 \backslash \Xs$, 
the operators $e^{i \, \divergence \, \bEi(s)}$ are invariant under the 
action of $\beta_{q, \,  \bxi_1 \, \bxi_2}$.
\item[(ii)] for any  
closed surface $\bsigma \subset {\RR^3 \backslash \Xs}$, 
the flux operators for the corresponding 
surface functions~$\bbi_\bsigma$, having support 
in a sufficiently small neighborhood of $\bsigma$,  satisfy 
\[
\beta_{q, \,  \bxi_1 \, \bxi_2}(e^{i \bEi(\bgi_\bsigma)}) = 
\begin{cases}
e^{i \bEi(\bgi_\bsigma)} & \text{if} \ \bsigma 
\ \text{encloses no point in} \ \Xs \\ 
e^{i q} \, e^{i \bEi(\bgi_\bsigma)} & \text{if} \ \bsigma \ 
\text{encloses only} \ \bxi_1 \\
e^{-iq} \, e^{i \bEi(\bgi_\bsigma)} & \text{if} \ \bsigma \ 
\text{encloses only} \ \bxi_2 \\
e^{i \bEi(\bgi_\bsigma)} & \text{if} \ \bsigma \ \text{encloses, both,} \
\bxi_1 \ \text{and} \ \bxi_2.
\end{cases}
\]
\item[(iii)] for any relatively compact,  
contractible region $\bOi \subset {\RR^3 \backslash \Xs}$
there exists some test 
function \mbox{$\bfi_{\, \bOi} \in \Test(\RR^3 \backslash \Xs)$} such that 
\[
\beta_{q, \,  \bxi_1 \, \bxi_2}(A) = 
e^{i \bBi(\bfi_{\, \bOi})} A e^{-i \bBi(\bfi_{\, \bOi})} \, , \quad A \in \Af(\bOi) \, .
\] 
\end{itemize}
\end{lemma}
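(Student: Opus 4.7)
The plan is to reduce each of the three items to the distributional identity $\divergence\,\bbi_0 = q\delta_{\bxi_1} - q\delta_{\bxi_2}$ together with the calculus identities (i)--(iii) at the end of Sec.~2. Throughout I would use two elementary facts about $\beta_{q,\bxi_1 \bxi_2}$: on the electric-field side it acts by $e^{i\bEi(\bgi)}\mapsto e^{i\langle\bbi_0,\bgi\rangle}\,e^{i\bEi(\bgi)}$, while on the vector-potential side it is trivial, which in particular makes every operator of the form $e^{i\bBi(\bfi)}=e^{i\bAi(\curl\,\bfi)}$ invariant.

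For (i) I would rewrite $e^{i\,\divergence\,\bEi(s)} = e^{-i\bEi(\gradient s)}$ and apply the automorphism. The resulting phase is $e^{-i\langle \bbi_0,\gradient s\rangle} = e^{i\langle\divergence\,\bbi_0,s\rangle} = e^{i q(s(\bxi_1)-s(\bxi_2))}$, which equals $1$ because $s$ vanishes on $\Xs$. For (ii) I would use identity (i) of Sec.~2 to represent $\bgi_\bsigma = -\gradient s_\btau$, where $\btau$ is a characteristic function of a region with $\partial\btau=\bsigma$, constructed from the same mollifier. An identical integration-by-parts gives phase $e^{i q(s_\btau(\bxi_1) - s_\btau(\bxi_2))}$. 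When the mollifier's support about $\bsigma$ is small enough that $\bxi_1$ and $\bxi_2$ lie either well inside or well outside $\btau$, the quantity $s_\btau(\bxi_j)$ takes only the values $0$ or $1$, producing exactly the four listed cases. Independence from the choice of mollifier follows by applying (i) to the difference of two valid mollifications, which is supported near $\bsigma$ and is a pure gradient.

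For (iii) the key observation is that conjugation by $e^{i\bBi(\bfi_\bOi)} = e^{i\bAi(\curl\,\bfi_\bOi)}$ acts trivially on all $\bAi$-Weyl operators and produces the phase $e^{i\langle\curl\,\bfi_\bOi,\bgi\rangle}$ on $e^{i\bEi(\bgi)}$. Comparing with the action of $\beta_{q,\bxi_1\bxi_2}$, the problem is reduced to finding $\bfi_\bOi \in \Test(\RR^3\setminus\Xs)$ with $\curl\,\bfi_\bOi = \bbi_0$ on $\bOi$. I would first verify that $\bbi_0$ is a genuine smooth divergence-free vector field on $\RR^3\setminus\Xs$: on any compact $K\subset\RR^3\setminus\Xs$ the approximants $\bai_u$ vanish for $u$ close to $0$ or $1$ because the paths $\bgamma_\byi(u)$ then lie near $\bxi_1,\bxi_2\notin K$, so the $\varepsilon\searrow 0$ limit is locally finite and smooth, with vanishing divergence away from $\Xs$. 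Since $\bOi$ is contractible and lies in $\RR^3\setminus\Xs$, Poincar\'e's lemma on a slightly enlarged, still contractible neighborhood $\bOi' \supset \bOi$ in $\RR^3\setminus\Xs$ produces a smooth $\bfi'$ with $\curl\,\bfi' = \bbi_0$; multiplying by a cutoff that equals $1$ on $\bOi$ and has support in $\bOi'$ yields the desired $\bfi_\bOi$.

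The main obstacle is item (iii): one must handle $\bbi_0$ as an honest smooth vector field in order to apply Poincar\'e's lemma, which requires the controlled escape of the $\bai_u$ from compact subsets of $\RR^3\setminus\Xs$ as $u\to 0,1$. The conditions built into the path family (the uniform lower bound $R>0$ and the defining formula $\bgamma_\byi(u) = \bgamma_0(u) + u(1-u)\byi$) are precisely what makes this quantitative argument go through; the remaining work is routine cohomology.
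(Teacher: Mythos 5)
Your proposal is correct and follows essentially the same route as the paper: parts (i) and (ii) via $\bgi_\bsigma=-\gradient\,s_\btau$ and integration by parts against $\divergence\,\bbi_\varepsilon$, and part (iii) by showing $\bbi_0$ restricts to a smooth divergence-free field away from $\Xs$ and invoking Poincar\'e's lemma on the contractible region to write it as $\curl\,\bfi_{\bOi}$, matching the Weyl phases. Your only deviation---cutting off a potential obtained on a slightly enlarged contractible neighborhood rather than extending the potential found on $\bOi$ itself---is a cosmetic variant at the same level of rigor as the paper's argument.
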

\begin{proof}
(i) Recalling that 
$\, \divergence \, \bEi(s) = - \bEi(\gradient \, s) \,$, the statement
follows from 
\[
- \langle \bbi_0, \, \gradient \, s \rangle 
= - \lim_{\varepsilon \searrow 0} \, \langle \bbi_\varepsilon, \,
\gradient \, s \rangle  
= \lim_{\varepsilon \searrow 0}  \, \langle \divergence \, \bbi_\varepsilon, 
\, s \rangle  = q \, s(\bxi_2) - q \, s(\bxi_1) = 0 \, .
\] 
(ii) As was mentioned in the preceding section, one has 
$\, \bgi_\bsigma = - \gradient \, s_\btau$. Hence 
\[
\langle \bbi_0, \, \bgi_\bsigma \rangle =
- \lim_{\varepsilon \searrow 0} \ \langle \bbi_\varepsilon, \, \gradient \, s_\btau
 \rangle 
= \lim_{\varepsilon \searrow 0} \ \langle \divergence \, \bbi_\varepsilon, \, s_\btau 
\rangle 
= q \, s_\btau(\bxi_1) - q \, s_\btau(\bxi_2) \, . 
\] 
Since $\bxi \mapsto s_\btau(\bxi)$ are mollified characteristic functions
of the region $\btau$, the statement follows. 

\medskip
(iii) For any scalar test function $s$ with support in $\bOi$, one has
\[
\langle \bbi_0 , \gradient \, s \rangle =
\lim_{\varepsilon \searrow 0} \, 
\langle \bbi_\varepsilon, \gradient \, s \rangle 
= \lim_{\varepsilon \searrow 0} \,
- \langle \divergence \, \bbi_\varepsilon, 
\, s \rangle = 0 
\]
since the points $\bxi_1, \bxi_2$ lie in the complement of the
support of $s$. 
Since all derivatives of $\bbi_\varepsilon$ exist on $\bOi$ and are
uniformly continuous in $\varepsilon$, the restriction of the 
distribution $\bbi_{\, 0}$ to $\Test(\bOi)$ can be represented by a 
smooth function $\bbi_{\bOi}$ and 
$\langle \divergence \, \bbi_{\bOi}, s \rangle =
-  \langle \bbi_{\, 0}, \gradient \, s \rangle = 0
$. 
Hence $\divergence \, \bbi_{\bOi} = 0$ on $\bOi$, 
and since $\bOi$ is contractible,
there exists  by Poincar\'e's lemma a smooth function $\bfi_{\bOi}$ 
such that 
$\bbi_{\bOi} = \curl \, \bfi_{\bOi}$. 
Extending $\bfi_{\bOi}$ to a test function
having support in a sufficiently small neighbourhood of $\bOi$,  it follows
that $\langle  \bbi_{\, 0}, \bgi \rangle = 
\langle \curl \, \bfi_{\bOi}, \bgi \rangle $, $\bgi \in \Test(\bOi)$.
Hence
\[
\beta_{q, \,  \bxi_1 \, \bxi_2}(e^{i \bEi(\bgi)}) =
e^{i \langle \bbi_{\, 0}, \bgi \rangle} \, e^{i \bEi(\bgi)} =
e^{i \langle \curl \, f_{\bOi}, \bgi \rangle} \, e^{i \bEi(\bgi)} =
e^{i \bBi(\bfi_{\bOi})} e^{i \bEi(\bgi)}  e^{-i \bBi(\bfi_{\bOi})} \, ,
\]
completing  the proof. 
\end{proof}
\noindent Having clarified the properties of the flux operators and 
gauge bridges, we can turn now to the analysis of states. 

\section{States containing  static charges}
\label{sec4}

The observable Weyl algebra $\Af(\RR^3)$ 
has an abundance of regular irreducible representations, where
the generators of the Weyl operators are densely defined. We pick any
such state $\omega_{\, \emptyset}$ which does not contain charged 
fields. So the charge density vanishes in this state, \ie
$ \omega_{\, \emptyset}(e^{i \, \divergence \, \bEi(s)}) = 1$  
for all scalar test functions $s$ on $\RR^3$. An 
example satisfying this condition is the vacuum state of the
free electromagnetic field. 

\medskip 
Putting static charges into this state is accomplished by the gauge
bridges, defined in the preceding section.  
We consider here the case of two opposite charges $\pm q$ at
given points $\Xs = \{ \bxi_1, \bxi_2 \}$  
and briefly comment on more general cases further below.
Let $\beta_{q, \,  \bxi_1 \, \bxi_2}$ be 
an automorphism of $\Af(\RR^3 \backslash \Xs)$,
inducing a gauge bridge. 
We then define a state $\omega_\Xs$, containing these charges, by
composing $\omega_{\, \emptyset}$ with this gauge bridge, 
\begin{equation} \label{e4.1}
\omega_\Xs(A) \doteq \omega_{\, \emptyset} \circ \beta_{q, \,  \bxi_1 \, \bxi_2}(A) \, ,
\quad A \in \Af(\RR^3 \backslash \Xs) \, .
\end{equation}
The following properties of this state are a direct consequence 
of Lemma \ref{l3.2} and the fact that 
$\omega_\emptyset(e^{i \bEi(\bgi_\bsigma)}) = 
\omega_\emptyset(e^{i \, \divergence \, \bEi(s_\btau)}) = 1$ 
for all closed surfaces $\bsigma$. 
\begin{lemma}
Let $\omega_\Xs$ be the state defined in equation \eqref{e4.1}.  Then
\begin{itemize}
\item[(i)] $\omega_\Xs(e^{i \, \divergence \, \bEi(s)}) = 1$ for all scalar 
test functions $s$ having support in $\RR^3 \backslash \Xs$.
\item[(ii)] for any closed surface 
$\bsigma \subset \RR^3 \backslash \Xs$ and corresponding surface function
$\bgi_\bsigma$, having support in a sufficiently small neighbourhood
of $\bsigma$, one has 
\[
\omega_\Xs(e^{i \bEi(\bgi_\bsigma)}) =
\begin{cases}
1 & \! \! \! \text{if $\sigma$ encloses none or both 
points $\bxi_1, \bxi_2$} \\
e^{iq}  & \! \! \! \text{if $\sigma$ encloses
only $\bxi_1$} \\
e^{-iq}  & \! \! \! \text{if $\sigma$ encloses 
only $\bxi_2$} 
\end{cases}
\]
\item[(iii)] for any relatively compact, contractible region 
$\bOi \subset \RR^3 \backslash \Xs$, the restriction
$\omega_\Xs \upharpoonright \Af(\bOi)$ coincides with a 
vector state in the GNS representation induced by 
$\omega_{\, \emptyset}$. 
\end{itemize}
\end{lemma}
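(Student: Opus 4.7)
The plan is to compute each of the three expectation values by unfolding the definition $\omega_\Xs = \omega_{\, \emptyset} \circ \beta_{q, \,  \bxi_1 \, \bxi_2}$ and invoking the corresponding item of Lemma~\ref{l3.2}. The only property of $\omega_{\, \emptyset}$ I shall use is the charge-density-free condition $\omega_{\, \emptyset}(e^{i \, \divergence \, \bEi(s)}) = 1$ for every scalar test function $s$ on $\RR^3$.

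For part (i), Lemma~\ref{l3.2}(i) states that $\beta_{q, \,  \bxi_1 \, \bxi_2}$ fixes $e^{i \, \divergence \, \bEi(s)}$ whenever $\supp s \subset \RR^3 \backslash \Xs$, so the expectation value equals $\omega_{\, \emptyset}(e^{i \, \divergence \, \bEi(s)}) = 1$ at once. For part (ii), I would first apply Lemma~\ref{l3.2}(ii) to obtain $\beta_{q, \,  \bxi_1 \, \bxi_2}(e^{i \bEi(\bgi_\bsigma)}) = c \cdot e^{i \bEi(\bgi_\bsigma)}$ with $c \in \{1, e^{iq}, e^{-iq}\}$ according to the homological type of $\bsigma$ relative to $\Xs$. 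It then remains to verify that $\omega_{\, \emptyset}(e^{i \bEi(\bgi_\bsigma)}) = 1$ for every closed surface $\bsigma \subset \RR^3 \backslash \Xs$. This follows from the identity $\bgi_\bsigma = - \gradient \, s_\btau$, recorded in Section~\ref{sec2} for any region $\btau$ with boundary $\partial \btau = \bsigma$, together with $\bEi(\gradient \, s_\btau) = - \divergence \, \bEi(s_\btau)$. Combining these yields $e^{i \bEi(\bgi_\bsigma)} = e^{i \, \divergence \, \bEi(s_\btau)}$, at which point the charge-density-free hypothesis on $\omega_{\, \emptyset}$ applies (note that $s_\btau$ is a test function on all of $\RR^3$, so the hypothesis is invoked for $\omega_{\, \emptyset}$ on $\Af(\RR^3)$, which is unproblematic).

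For part (iii), Lemma~\ref{l3.2}(iii) supplies a test function $\bfi_{\, \bOi} \in \Test(\RR^3 \backslash \Xs)$ such that the restriction of $\beta_{q, \,  \bxi_1 \, \bxi_2}$ to $\Af(\bOi)$ coincides with the inner automorphism $A \mapsto e^{i \bBi(\bfi_{\, \bOi})} A e^{-i \bBi(\bfi_{\, \bOi})}$. Passing to the GNS triple $(\pi_{\, \emptyset}, \Hil_{\, \emptyset}, \Omega_{\, \emptyset})$ associated with $\omega_{\, \emptyset}$ and setting $\Psi \doteq \pi_{\, \emptyset}(e^{-i \bBi(\bfi_{\, \bOi})}) \, \Omega_{\, \emptyset}$, I obtain $\omega_\Xs(A) = \langle \Psi, \pi_{\, \emptyset}(A) \Psi \rangle$ for every $A \in \Af(\bOi)$, realising $\omega_\Xs \upharpoonright \Af(\bOi)$ as a vector state in the GNS representation of $\omega_{\, \emptyset}$.

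No genuine obstacle arises: each item reduces to a short computation once Lemma~\ref{l3.2} is available. The only mildly delicate point I would treat explicitly is the algebraic identification $e^{i \bEi(\bgi_\bsigma)} = e^{i \, \divergence \, \bEi(s_\btau)}$ used in part (ii); modulo this brief remark, the proof is a direct unwinding of the definitions.
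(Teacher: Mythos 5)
Your proof is correct and follows exactly the route the paper intends: the paper states the lemma as a direct consequence of Lemma~\ref{l3.2} together with $\omega_{\,\emptyset}(e^{i \bEi(\bgi_\bsigma)}) = \omega_{\,\emptyset}(e^{i \, \divergence \, \bEi(s_\btau)}) = 1$, which is precisely the computation you carry out, including the GNS vector-state realisation in part (iii). Nothing is missing; your explicit remark on the identity $e^{i \bEi(\bgi_\bsigma)} = e^{i \, \divergence \, \bEi(s_\btau)}$ and on using the charge-free condition for $s_\btau$ supported near $\Xs$ simply spells out what the paper leaves implicit.
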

Part (ii) of this lemma shows that the state $\omega_\Xs$ describes a static
electric charge $q$ at $\bxi_1$ and $-q$ at $\bxi_2$, which both can be 
determined by flux operators (Gauss' law). According to (i), there are no other
charges present. Finally, it follows from (iii) that 
in contractible regions the state cannot be distinguished from states
in the superselection sector of $\omega_{\, \emptyset}$. Let us mention as an 
aside that these properties prevail in the causal completion 
of $\RR^3 \backslash \Xs$ in Minkowski space. 
Going to the full space would require, however, to specify 
details of the interaction between the field and the charges, which is 
not necessary here. 

\medskip
Let us turn now to the idea of Mund, Rehren and Schroer to make use of
string-localized potentials, built from observables.
In their approach such operators were concretely given as limits of
integrals involving the electromagnetic fields
$\bBi, \bEi$. We do not need such detailed information here and 
will cover arbitrary constructions of this kind, provided they are
based on observables. To this end we pick any sequence or,
more generally, net of unitary operators \ 
$\{ U_\iota \in \Af(\RR^3 \backslash \Xs) \}_{\iota \in \II}$, 
$\II$~being some index set, and consider the corresponding
automorphisms  
$$
\gamma_\iota(A) \doteq U_\iota A U_\iota^{-1} \, , \quad 
A \in \Af(\RR^3 \backslash \Xs) \, .
$$
According to standard compactness arguments, any family of states 
$\{ \omega_\emptyset \circ \gamma_\iota \}_{\iota \in \II}$ on 
$\Af(\RR^3 \backslash \Xs)$ has limit points
in the weak-*-topology, which are again states on this 
algebra. We then have the following result. 

\begin{lemma}
Let $\overline{\omega}$ be any weak-*-limit point of some family
\mbox{$\{ \omega_\emptyset \circ \gamma_\iota \}_{\iota \in \II}$}. Then
\[ 
\overline{\omega}(e^{i \bEi(\bgi_\bsigma)}) = 1
\]
for all closed surfaces $\bsigma \subset \RR^3 \backslash \Xs$ 
and all surface functions 
$\bgi_\bsigma$ which have support in a sufficiently small neighbourhood
of $\bsigma$.
\end{lemma}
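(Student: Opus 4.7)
The plan is to exploit the centrality of flux operators inside $\Af(\RR^3 \backslash \Xs)$, which was established in Lemma \ref{l3.1}, and combine it with the vanishing-charge-density property of $\omega_\emptyset$. The payoff should be that every member of the net $\{\omega_\emptyset \circ \gamma_\iota\}$ already takes the value $1$ on any such flux operator, so passing to a weak-*-limit is automatic.

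First I would observe that, by hypothesis, the test function $\bgi_\bsigma$ has support in a small neighborhood of $\bsigma \subset \RR^3 \backslash \Xs$ that still avoids $\Xs$. Hence $e^{i\bEi(\bgi_\bsigma)} \in \Af(\RR^3 \backslash \Xs)$, and Lemma \ref{l3.1} applies: this Weyl operator lies in the center of $\Af(\RR^3 \backslash \Xs)$. Since the unitaries $U_\iota$ are by assumption elements of $\Af(\RR^3 \backslash \Xs)$, it follows that
\[
\gamma_\iota(e^{i\bEi(\bgi_\bsigma)}) = U_\iota \, e^{i\bEi(\bgi_\bsigma)} \, U_\iota^{-1} = e^{i\bEi(\bgi_\bsigma)}
\]
for every $\iota \in \II$. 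In other words, no inner automorphism of the observable algebra can move a flux operator. This is the step that embodies the conceptual point of the paper: observable-built automorphisms are blind to the flux operators that detect charge.

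Next I would evaluate $\omega_\emptyset$ on the unchanged flux operator. Writing the closed surface as $\bsigma = \partial \btau$ and invoking identity (i) from the preliminaries, $\bgi_\bsigma = -\gradient \, s_\btau$, together with the integration-by-parts identity $\bEi(\gradient s) = -\divergence\,\bEi(s)$, one obtains
\[
e^{i\bEi(\bgi_\bsigma)} = e^{i\,\divergence\,\bEi(s_\btau)}.
\]
Because $\omega_\emptyset$ is a chargeless state, $\omega_\emptyset(e^{i\,\divergence\,\bEi(s)}) = 1$ for \emph{all} scalar test functions $s$ on $\RR^3$, whether or not their support meets $\Xs$; in particular this holds for $s_\btau$. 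Therefore $\omega_\emptyset \circ \gamma_\iota(e^{i\bEi(\bgi_\bsigma)}) = 1$ for every $\iota$.

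Finally, the map $\omega \mapsto \omega(e^{i\bEi(\bgi_\bsigma)})$ is by definition weak-*-continuous on the state space, so any weak-*-limit point $\overline\omega$ of the net must also assign the value $1$ to this flux operator, as claimed. The main potential obstacle is only the bookkeeping concerning $s_\btau$ when $\btau$ contains points of $\Xs$, but this dissolves once one recalls that $\omega_\emptyset$ is defined on the full algebra $\Af(\RR^3)$ and annihilates the charge density everywhere; nothing deeper is required.
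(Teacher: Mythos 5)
Your proposal is correct and follows essentially the same route as the paper: invariance of the flux operator under the inner automorphisms $\gamma_\iota$ via Lemma \ref{l3.1}, rewriting $e^{i\bEi(\bgi_\bsigma)}$ as a charge-density Weyl operator $e^{\pm i\,\divergence\,\bEi(s_\btau)}$, and using that $\omega_\emptyset$ has vanishing charge density everywhere (including where $\btau$ meets $\Xs$) before passing to the weak-*-limit. Nothing further is needed.
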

\begin{proof}
According to Lemma \ref{l3.1} one has \ 
$\gamma_\iota(e^{i \bEi(\bgi_\bsigma)}) = 
U_\iota e^{i \bEi(\bgi_\bsigma)} U_\iota^{-1}  = e^{i \bEi(\bgi_\bsigma)}$, 
s$\iota \in \II$. Thus 
\[
\overline{\omega}(e^{i \bEi(\bgi_\bsigma)}) = \lim_\iota \, 
\omega_\emptyset \circ \gamma_\iota(e^{i \bEi(\bgi_\bsigma)}) = 
\omega_\emptyset(e^{i \bEi(\bgi_\bsigma)}) \, .
\] 
But $e^{i \bEi(\bgi_\bsigma)} = e^{-i \, \divergence \bEi(s_\btau)}$, so the
statement follows from the fact that the charge density
vanishes everywhere in $\omega_\emptyset$. 
\end{proof}
This lemma shows that there is no way of constructing gauge bridges
between charged fields by relying on observable
potentials. In particular, it conflicts 
with the hope that such potentials could be a
substitute for the local gauge fields in quantum electrodynamics. In
other words, local gauge fields are not 
only the solutions of cohomological problems, making
closed two-forms of observables exact. But, more importantly, 
they embody fundamental informations which are  
indispensible in physical applications. 

\medskip
We conclude this section by briefly commenting on the structure of
states containing several static charges at points
$\Xs = \{ \bxi_1, \dots , \bxi_n \}$. Assuming that the values
of these charges are $\pm q$, one connects the points 
$\bxi_k, \bxi_l \in \Xs$ by gauge bridges 
$\beta_{q_k , \bxi_k \, \bxi_l}$ with variable charges $q_k \in \RR$.
The sum of the charges $q_k$ for all paths emanating from a 
given point $\bxi_k$ must be equal to the charge $\pm q$ sitting 
there. So gauge bridges can connect a given 
charge with several other charges. One then applies the 
resulting product of gauge bridges to the state $\omega_\emptyset$
in order to obtain a state $\omega_\Xs$ for the given configuration
$\Xs$. If the total charge of
the configuration $\Xs$ is different from zero, one must
add to it charges at distant points so that 
the resulting collection $\Xs_0 \supset \Xs$ is neutral.
In case of the additional gauge bridges, appearing in 
this manner, one proceeds to limits, where the added points are
shifted to infinity. These limits 
exist because of the spatial localisation properties of the observables. 
One thereby arrives at consistent models of an arbitrary number 
of static electric charges in $\RR^3 \backslash \Xs$ 
for which Gauss' law holds. 

\section{Conclusions}
\label{sec5}

Intrigued by ideas of  Mund, Rehren and Schroer \cite{Mu,MuReSch,Sch},
we have studied in the present article the question whether local 
gauge fields can be replaced by string-localized potentials, obtained from  
local observables. Considering the case of the electromagnetic 
field in presence of infinitely heavy, 
electrically charged fields, we have shown that no kind of limiting 
procedure, involving observable potentials, can approximate the 
gauge bridges between them. These gauge bridges are 
an inevitable ingredient of the theory. In particular, they 
ensure the validity of Gauss' law, \ie the fact that 
the values of the electric charge can be determined by 
flux measurements. Thus the gauge bridges describe   
additional degrees of fredom, sometimes heuristically 
called ``longitudinal photons''. They have to be added to the 
observable string-localized potentials in one way or another. 

\medskip
Our results highlight the fact that the role of gauge fields 
reaches far beyond the cohomological problem 
of solving the homogeneous Maxwell equations. This insight 
prevails in most publications, where it is taken
for granted. Even if one is only interested in computations
of gauge invariant operators, the appearance of local gauge 
fields seems inevitable. 
For example, the construction of the electric
current by Brandt \cite{Br} in renormalized perturbation
theory, relying on point splitting techniques,
involves powers of the gauge fields, compensating 
gauge degrees of freedom of the charged matter fields in
the limit of coinciding points. 
Similarly, Steinmann in his perturbative construction of cone-localized 
charged physical fields \cite{St} had to rely on 
formal power series in terms of local gauge fields. So, together with
the present results, we come to the conclusion that the observable 
string-localized vector potentials, by themselves, are not sufficient
to describe all physically relevant features of local gauge fields. 

\section*{Acknowledgement}

\vspace*{-2mm}
DB  gratefully acknowledges the hospitality and support 
extended to him by Roberto Longo and the University of Rome 
``Tor Vergata'', which made this collaboration possible. 
FC and GR are supported by the ERC Advanced Grant 
669240 QUEST ``Quantum Algebraic Structures and Models". 
EV is supported in part by OPAL ``Consolidate the Foundations''. 
All authors acknowledge support by the MIUR Excellence 
Department Project, awarded to the Department of Mathematics, 
University of Rome Tor Vergata, CUP E83C18000100006.

\end{document}